\DeclareDocumentCommand\setdef{mo}{\left\{#1\IfNoValueTF{#2}{}{ \mid #2}\right\}}
\DeclareDocumentCommand\conv{o}{\operatorname{conv}\IfValueTF{#1}{\left(#1\right)}{}}
\DeclareDocumentCommand\xc{o}{\operatorname{xc}\IfValueTF{#1}{\left(#1\right)}{}}
\DeclareDocumentCommand\orderO{m}{\mathcal{O}\left(#1\right)}
\DeclareDocumentCommand\onevec{o}{\IfNoValueTF{#1}{\mathbbm{1}}{\mathbbm{1}_{#1}}}
\DeclareDocumentCommand\zerovec{o}{\IfNoValueTF{#1}{\mathbb{O}}{\mathbb{O}_{#1}}}
\DeclareDocumentCommand\transpose{m}{#1^{\intercal}}
\DeclareDocumentCommand\R{}{\mathbb{R}}
\DeclareDocumentCommand\idmat{o}{\IfNoValueTF{#1}{\mathbb{I}}{\mathbb{I}_{#1}}}
\DeclareDocumentCommand\spfPoly{}{P_{\mathrm{sp.forests}}(G)}
\DeclareDocumentCommand\charvec{m}{\chi({#1})}
\DeclareDocumentCommand\F{o}{\mathbb{F}\IfValueT{#1}{_{#1}}}
\DeclareDocumentCommand\indPoly{m}{P(#1)}
\DeclareDocumentCommand\matroid{oo}{\mathcal{M}\IfValueT{#2}{_{#2}}\IfValueT{#1}{\left(#1\right)}}
\DeclareDocumentCommand\down{m}{{#1}^{\downarrow}}
\DeclareDocumentCommand\ispan{mo}{\IfNoValueTF{#2}{\operatorname{span}{#1}}{\operatorname{span}_{#2}{#1}}}
\DeclareDocumentCommand\disjunion{}{\uplus}
\begin{document}

\title{Extended Formulations for Independence Polytopes of Regular Matroids\thanks{V. Kaibel and S. Weltge acknowledge
support by Deutsche Forschungsgemeinschaft (KA 1616/4-1). J. Lee was partially supported by NSF grant CMMI--1160915 and
ONR grant N00014-14-1-0315.}}


\author{Volker Kaibel \and Jon Lee \and Matthias Walter \and Stefan Weltge}


\institute{V. Kaibel, M. Walter, S. Weltge \at
              Universit\"atsplatz 2, 39106 Magdeburg, Germany \\
              \email{[kaibel,matthias.walter,weltge]@ovgu.de}           
           \and
           J. Lee \at
              Department of Industrial and Operations Engineering 1205 Beal Avenue, The University of Michigan, Ann
              Arbor, MI 48109-2117, USA \\
              \email{jonxlee@umich.edu}
}

\date{Received: date / Accepted: date}

\maketitle

\begin{abstract}
    We show that the independence polytope of every regular matroid has an extended formulation of size quadratic in
    the size of its ground set.
    This generalizes a similar statement for (co-)graphic matroids, which is a simple consequence of
    Martin's extended formulation for the spanning-tree polytope.
    In our construction, we make use of Seymour's decomposition theorem for regular matroids.
    As a consequence, the extended formulations can be computed in polynomial time.
    \keywords{extended formulation \and independence polytope \and regular matroid \and decomposition}
    \subclass{52Bxx}
\end{abstract}

\section{Introduction}
The theory of extended formulations deals with the concept of representing polytopes as linear projections of
other polytopes.
An \emph{extension} of a polytope $ P $ is some polytope $ Q $ together with a linear map $ \pi $ such that $ \pi(Q) = P
$.
Given an outer description of $ Q $ by means of linear inequalities and equations, this is called an \emph{extended
formulation} for $ P $.
The \emph{size} of an extension $ (Q,\pi) $ is defined as the number of facets of $ Q $.
The smallest size of any extension of a polytope $ P $ (i.e., the smallest number of inequalities in any extended
formulation for $ P $) is called the \emph{extension complexity} of $ P $ and is denoted by $ \xc[P] $.

The construction of extended formulations has played an important role in the design of many algorithms solving
combinatorial-optimization problems.
It turns out that polytopes associated to tractable combinatorial-optimization problems often admit polynomial size (in
their dimension) extensions.
However, the area of extended formulation has received renewed attraction due to recent results establishing exponential lower
bounds on the extension complexities of certain polytopes including the TSP polytope~\cite{FioriniMPTW12} and the
perfect-matching polytope~\cite{Rothvoss14}.
While most recent research has focused on improving and extending these results, this paper aims at contributing new
positive results pertaining to a well-known class of combinatorial polytopes, namely independence polytopes of regular
matroids.

Given a matroid~$ \matroid = (E, \mathcal{I}) $ with ground set~$ E $ and independent sets~$ \mathcal{I} $, the
\emph{independence polytope} of~$ \matroid $ is defined as
\[
    \indPoly{\matroid} := \conv\setdef{\charvec{I}}[I \in \mathcal{I}],
\]
where~$ \charvec{I} \in \{0,1\}^E $ is the characteristic vector of~$ I $ with $ \charvec{I}_e = 1 $ if and only if~$ e
\in I $.
Independence polytopes of matroids are central objects in the field of combinatorial optimization.
It is well-known that all facet-defining inequalities for $ \indPoly{\matroid} $ are nonnegativity constraints or
inequalities of the form $ \sum_{i \in S} x_i \le r(S) $ with $ S \subseteq E $, where $ r $ denotes the rank function
of $ \matroid $.
Furthermore, any linear function can be maximized over $ \indPoly{\matroid} $ by a simple greedy algorithm involving
only a
linear number of independence oracle calls.
For more results on independence polytopes of matroids, see, e.g., Schrijver~\cite{Schrijver03}.

In this sense, independence polytopes of matroids are well-understood.
One might wonder whether all such polytopes admit polynomial size extended formulations.
Unfortunately, this question was answered negatively by Rothvoss~\cite{Rothvoss12} who showed that there exists a family
of independence polytopes of matroids having extension complexity growing
exponentially in the dimension.
On the positive side, there are only a few interesting classes of matroids for which we know that the corresponding
independence polytopes admit polynomial size extensions.
As we will see, using Martin's~\cite{Martin91} extended formulation for the spanning-forest polytope, it is easy to
derive quadratic size extended formulations for independence polytopes of graphic and cographic matroids.
Recently, this has been generalized by Iwata~et~al.~\cite{IwataKKKO15} to the class of ``sparsity matroids'' and later
in~\cite{ConfortiKWW15} to the even more general class of ``count matroids''.
In this work, we derive quadratic-size extended formulations for independence polytopes of
all regular matroids -- another,
well-known superclass of (co-)graphic matroids.
While the notion of extension complexity only captures existence, the extended formulations presented in this paper
indeed can be constructed by a polynomial-time algorithm for regular matroids specified by matrix representations.

\paragraph{Outline}
In Section~\ref{secGraphic}, we recall known results implying quadratic-size extended formulations for independence
polytopes of graphic and cographic matroids.
Section~\ref{secRegular} contains basic definitions and facts about linear and regular matroids including $ 1 $-, $ 2
$- and $ 3 $-sums of regular matroids and Seymour's decomposition theorem.
Then, in Section~\ref{secResult}, we give an alternative characterization of independent sets in a $ 1 $-, $ 2 $- or $
3 $-sum.
This allows us to derive quadratic-size extended formulations for independence polytopes of regular
matroids, as our main result.
Finally, we give remarks on applications to related classes of matroids and pose some open questions in
Section~\ref{secRemarks}.

\section{Graphic and Cographic Matroids}
\label{secGraphic}
Classic examples of matroids are \emph{graphic matroids}.
Given an undirected graph~$ G = (V, E) $, the graphic matroid of $ G $ has ground set~$ E $, where a set of edges is
independent if and only if it does not contain a cycle.
In other words, a set of edges is independent if it is contained in a spanning forest
(i.e., a cycle-free subgraph that has the same connected components as $G$).
Thus, if~$ \matroid[G] $ is the graphic matroid of some graph~$ G $, we have
\begin{equation}
    \label{eqIndPolyGraphic}
    \indPoly{\matroid[G]} = \conv \setdef{x \in \{0,1\}^E}[x \le \chi(F) \text{ for some spanning forest } F].
\end{equation}
In order to derive an extended formulation for~$ \indPoly{\matroid[G]} $, we make use of the following simple result
concerning the monotonization of $ 0/1 $-polytopes.
Though it can probably be considered folklore, we include a brief proof, since we are not aware of any appropriate
reference.

\begin{lemma}
    \label{lemMono}
    For $ Y \subseteq \{0,1\}^n $, $P:=\conv[Y]$, and
    \[
        \down{P} :=\conv \setdef{x \in \{0,1\}^n}[x \le y \text{ for some } y \in Y]\,,
    \]
    we have
    \[
        \down{P} = \setdef{x \in \R^n_+}[\exists y \in P: x \le y],
    \]
    thus $ \xc[\down{P}] \le \xc[P] + 2n $.
\end{lemma}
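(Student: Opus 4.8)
The plan is to establish the displayed set identity first, and then read off the size bound by a routine projection construction. Write $R := \setdef{x \in \R^n_+}[\exists y \in P: x \le y]$ for the right-hand side, and recall that $\down{P} = \conv[G]$, where $G := \setdef{x \in \{0,1\}^n}[x \le y \text{ for some } y \in Y]$ is the generating set appearing in the definition of $\down{P}$. I would then prove the two inclusions $\down{P} \subseteq R$ and $R \subseteq \down{P}$ separately.

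The inclusion $\down{P} \subseteq R$ is the easy one: every element of $G$ is nonnegative and is bounded above by some $y \in Y \subseteq P$, hence lies in $R$; and $R$ is convex, being the image of the convex set $\setdef{(x,y)}[x \ge \zerovec,\ x \le y,\ y \in P]$ under the projection $(x,y) \mapsto x$. Therefore $\conv[G] \subseteq R$.

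The substance is the reverse inclusion $R \subseteq \down{P}$. Fix $x \in R$ with $\zerovec \le x \le y$ and $y \in P$; since $y \le \onevec$, also $x \in [0,1]^n$. I would write $y$ as a convex combination $y = \sum_k \lambda_k y^k$ of elements $y^k \in Y$ with all $\lambda_k > 0$, and then split $x$ \emph{along this decomposition} rather than along its own coordinates: for each $k$ set $x^k_j := x_j/y_j$ whenever $y^k_j = 1$, and $x^k_j := 0$ whenever $y^k_j = 0$. This is legitimate because $y_j = \sum_{k:\, y^k_j = 1} \lambda_k > 0$ as soon as some $y^k_j = 1$, while $y_j = 0$ forces $x_j = 0$ anyway. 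One checks immediately that $\zerovec \le x^k \le y^k$ coordinatewise (using $x_j \le y_j$) and that $\sum_k \lambda_k x^k = x$. Since $y^k$ is a $0/1$ vector, the box $\setdef{z}[\zerovec \le z \le y^k]$ has exactly the $0/1$ vectors below $y^k$ as its vertices, and each of these lies in $G$; hence $x^k \in \conv[G] = \down{P}$, and consequently $x = \sum_k \lambda_k x^k \in \down{P}$.

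Granting the identity, the bound $\xc[\down{P}] \le \xc[P] + 2n$ follows at once: if $(Q,\pi)$ is a minimum-size extension of $P$, with $Q$ cut out by $\xc[P]$ inequalities in variables $w$ and $\pi$ linear, then $\setdef{(x,w)}[w \in Q,\ x \ge \zerovec,\ x \le \pi(w)]$ together with the projection $(x,w) \mapsto x$ is an extension of $R = \down{P}$ described by $\xc[P] + 2n$ inequalities. I expect the only genuinely nontrivial step to be the averaging argument in the previous paragraph: the tempting induction that rounds a single fractional coordinate of $x$ up or down fails — already for $Y = \{(1,0),(0,1)\}$ and $x = (\tfrac12,\tfrac12)$, rounding either coordinate up escapes the set — so it is essential to decompose $x$ according to a convex representation of the dominating point $y$, which then lets each piece sit inside an integral box spanned by $0/1$ points of $G$.
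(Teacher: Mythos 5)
Your proof is correct, but it takes a genuinely different route from the paper's. The paper argues via support functions: it observes that for any objective $c \in \R^n$, maximizing $\langle c,x\rangle$ over $\down{P}$ amounts to maximizing $\langle \bar c,y\rangle$ over $Y$ (and hence over $P$) with $\bar c_i := \max\{c_i,0\}$, and the same holds for the right-hand side set; since both sets are compact and convex, equal support functions force equality. Your argument is instead a direct two-inclusion proof whose core is an explicit convex decomposition: given $\zerovec \le x \le y$ with $y = \sum_k \lambda_k y^k$, $y^k \in Y$, you split $x$ along this representation via $x^k_j := x_j/y_j$ on the support of $y^k$, verify $\zerovec \le x^k \le y^k$ and $\sum_k \lambda_k x^k = x$, and place each $x^k$ in the integral box $\setdef{z}[\zerovec \le z \le y^k]$ whose vertices all lie in the generating set. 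Both are sound; the paper's is shorter and exploits the greedy/monotone structure in one line, while yours is more constructive (it produces an explicit convex combination witnessing membership, which could be useful algorithmically) and avoids invoking the fact that compact convex sets with identical support functions coincide. Your derivation of the size bound $\xc[\down{P}] \le \xc[P] + 2n$ from the identity, via the lifted system $\setdef{(x,w)}[w \in Q,\ x \ge \zerovec,\ x \le \pi(w)]$, is exactly what the paper's ``thus'' intends, and your closing remark correctly identifies why a naive coordinatewise rounding induction would fail.
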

\begin{proof}
    Let us
    define
    $ Q := \setdef{x \in \R^n_+}[\exists x \in P: y \le x] $.
    For every $ c \in \R^n $, setting $ \bar{c}_i := \max \{c_i, 0\} $ for all $ i \in [n] $ yields
    \begin{align*}
        \max \setdef{\langle c,x \rangle}[x \in \down{P}]
        & = \max \setdef{\langle c,x \rangle}[x \in \{0,1\}^n, \, x \le y \text{ for some } y \in Y] \\
        & = \max \setdef{\langle \bar{c},y \rangle}[y \in Y] \\
        & = \max \setdef{\langle \bar{c},y \rangle}[y \in P] \\
        & = \max \setdef{\langle c,x \rangle}[x \in Q],
    \end{align*}
    and thus $ \down{P} = Q $. The term $2n$ in the claimed inequality is due to the condition $ \zerovec \le x \le y $ in the description of $Q$.
\qed
\end{proof}

Together with Equation~\eqref{eqIndPolyGraphic}, Lemma~\ref{lemMono} implies
\begin{equation}
    \label{eqUpperBoundGraphic}
    \xc[\indPoly{\matroid[G]}] \le \xc[\spfPoly] + 2|E|,
\end{equation}
where $ \spfPoly $ is the \emph{spanning-forest polytope} of~$ G $, i.e., the convex hull of characteristic
vectors of spanning forests of~$ G $.

Closely related to graphic matroids are \emph{cographic matroids}, which are the duals of graphic matroids.
Given an undirected graph~$ G = (V, E) $, the cographic matroid of~$ G $ also has ground set~$ E $, where now a set of edges
is independent if and only if it is contained in the complement of a spanning forest.
Thus, if~$ \matroid^*(G) $ is the cographic matroid of some graph~$ G $, we obtain
\[
    \indPoly{\matroid^*(G)} = \conv \setdef{x \in \{0,1\}^E}[x \le \onevec - \chi(F) \text{ for some spanning forest }
    F],
\]
where~$ \onevec $ denotes the all-ones vector.
Hence, again by Lemma~\ref{lemMono}, this implies
\begin{align}
    \nonumber
    \xc[\indPoly{\matroid^*(G)}] & \le \xc[\onevec - \spfPoly] + 2|E| \\
    \label{eqUpperBoundCographic}
    & = \xc[\spfPoly] + 2|E|.
\end{align}
As Martin~\cite{Martin91} showed, the spanning-forest polytope of a graph~$ G = (V,E) $ admits an extended formulation
of size $ \orderO{|V| \cdotp |E|} $.
Because the spanning-forest polytope of a graph is the 
Cartesian product
 of the spanning-forest polytopes of its connected
components, and because one has $ |V| \le |E| - 1 $ for every connected graph~$ G = (V,E) $, we obtain the estimate $
\xc[\spfPoly] \le \orderO{|E|^2} $.
Using Inequality~\eqref{eqUpperBoundGraphic} and Inequality~\eqref{eqUpperBoundCographic}, we conclude:

\begin{proposition}
    \label{propXcGraphicCographic}
    For any graphic or cographic matroid $ \matroid $ on ground set $ E $ we have
    \[
        \xc[\indPoly{\matroid}] \le \orderO{|E|^2}.
    \]
\end{proposition}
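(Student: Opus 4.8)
The plan is to assemble the pieces already developed in this section; the proposition is essentially a corollary. First recall that every graphic matroid equals $\matroid[G]$ for some undirected graph $G = (V,E)$, and that every cographic matroid equals $\matroid^*(G)$ for some $G = (V,E)$. By Inequality~\eqref{eqUpperBoundGraphic} in the graphic case and Inequality~\eqref{eqUpperBoundCographic} in the cographic case, it therefore suffices to prove the single estimate $\xc[\spfPoly] \le \orderO{|E|^2}$, after which $\xc[\indPoly{\matroid}] \le \xc[\spfPoly] + 2|E| \le \orderO{|E|^2}$ in both cases.

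To bound $\xc[\spfPoly]$, I would invoke Martin's~\cite{Martin91} extended formulation, which gives an extension of size $\orderO{|V| \cdotp |E|}$ for the spanning-forest polytope of any graph $G = (V,E)$. For a connected graph with at least one edge, a spanning forest is a spanning tree with $|V|-1$ edges, so $|V| - 1 \le |E|$ and hence $\orderO{|V| \cdotp |E|} = \orderO{|E|^2}$. For a general graph $G$ I would first decompose it into its connected components $G_1,\dots,G_k$ with edge sets $E_1,\dots,E_k$: a subset of edges is a spanning forest of $G$ if and only if its restriction to each $G_i$ is a spanning forest of $G_i$, so $\spfPoly$ is the Cartesian product of the polytopes $P_{\mathrm{sp.forests}}(G_i)$. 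Since the extension complexity of a Cartesian product is at most the sum of the extension complexities of its factors, $\xc[\spfPoly] \le \sum_{i=1}^{k} \orderO{|E_i|^2} \le \orderO{\big(\sum_{i=1}^{k} |E_i|\big)^2} = \orderO{|E|^2}$; components consisting of a single isolated vertex contribute just a point and can simply be dropped.

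I do not expect any genuine difficulty here: the argument is bookkeeping on top of Lemma~\ref{lemMono}, Martin's theorem, and the product structure of the spanning-forest polytope. The only place where a little care is needed is the reduction of $\orderO{|V| \cdotp |E|}$ to $\orderO{|E|^2}$, since $|V|$ can be much larger than $|E|$ when $G$ has isolated vertices or, more generally, many components; passing to connected components first is exactly what makes this step valid.
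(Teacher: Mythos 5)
Your proof is correct and follows essentially the same route as the paper: reduce to the spanning-forest polytope via Lemma~\ref{lemMono} (Inequalities~\eqref{eqUpperBoundGraphic} and~\eqref{eqUpperBoundCographic}), invoke Martin's $\orderO{|V|\cdotp|E|}$ extended formulation, and pass to connected components to convert this to $\orderO{|E|^2}$ using the Cartesian-product structure and the bound $|V|-1\le|E|$ for connected graphs. Your explicit treatment of isolated vertices is a small touch of extra care over the paper's phrasing, but the argument is the same.
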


\noindent
Moreover, a result by Williams~\cite{Williams02} 
even provides 
linear-size extended formulations for spanning-forest polytopes
in the case of planar graphs.
Thus, if~$ G $ is a planar graph, we can improve the bound in Proposition~\ref{propXcGraphicCographic} to $
\xc[\indPoly{\matroid}] \le \orderO{|E|} $.

\section{Regular Matroids}
\label{secRegular}
A much more general family of matroids 
comprising
 graphic and cographic matroids  is the class of \emph{linear
matroids}.
Given a matrix $ A \in \F^{p \times q} $ with entries in some field $ \F $, we denote by
\[
    \matroid[A][\F] := \setdef{I \subseteq [p+q]}[(\idmat \ A)_{\star,I}\text{ has full column-rank over~$\F$}]
\]
the (set of independent sets of the) matroid defined by~$ A $, where $ \idmat $ is the $ p \times p $-identity matrix
and $ (\idmat \ A)_{\star,I} $ denotes the submatrix of $ ( \idmat \ A ) $ consisting of all rows but only the columns~$
I $.
Therefore, the cardinality of the ground set of $ \matroid[A][\F] $ is $ p + q $, i.e., the number columns of the
\emph{identity-extension} $(\idmat \ A)$ of the matrix~$A$.
Note that this use of notation differs, e.g., from Oxley~\cite{Oxley92}, but is in accordance to Schrijver's
book~\cite[Chap.  19]{Schrijver86}.
Here, we allow $ A $ to have 
$q=0$ 
columns (in which case $ \matroid[A][\F] $ is a \emph{free matroid} on $ p $ elements with all subsets being independent) or 
$p=0$ 
rows (in which case $ \matroid[A][\F] $ is a matroid on $ q $ elements such that the empty set is the only
independent set),
but we will always have $p+q>0$.
If a matroid $ \matroid $ is isomorphic to $ \matroid[A][\F] $ for some matrix $ A $ and some field $ \F $, we say that
$ \matroid $ can be \emph{represented} (by $ A $) over $ \F $.
The class of linear matroids consists of all matroids that can be represented over some field.

In this paper, we focus on the well-known class of matroids that can be represented over every field, namely
\emph{regular matroids}.
It can be shown that a matroid is regular if and only if it can be represented by a totally-unimodular matrix over $ \R
$ (see, e.g., \cite[Chap.~19]{Schrijver86}). 
Note that for a totally-unimodular matrix $ A $, the matroid $ \matroid[A][\F] $ does not depend on the specific choice of the field $ \F $. Thus, we will mainly work over the most simple field $ \F_2 $, with two elements.

Key examples of regular matroids are graphic matroids.
Let~$ G = (V,E) $ be an undirected 
connected
graph.
Choosing some~$ T \subseteq E $ that forms a spanning tree of~$ G $ and assigning some orientation to all edges in~$ E
$, let us construct a matrix $ A \in \{0, 1, -1\}^{T \times E} $ as follows:
For every pair of (directed) edges~$ t \in T $ and $ e = (v,w) \in E $, set the entry $ A_{t,e} $ to~$1$ or~$-1$ if the
path from~$ v $ to~$ w $ in~$ T $ passes through~$ t $ in forward or backward direction, respectively, and to~$ 0 $ if
it does not pass through $ t $ at all.
It can be shown that $ \matroid[G] $ is (isomorphic to) $ \matroid[A][\R] $ and that $ A $ is totally unimodular.
In particular, this implies that $ \matroid[G] $ is regular.

Not every regular matroid is graphic or cographic.
However, it turns out that all remaining regular matroids can be constructed from only graphic matroids, cographic
matroids and matroids of size at most ten.

\subsection{Seymour's Decomposition Theorem}
From its basic definition, there seems to be no hint concerning crucial properties of regular matroids to exploit in order to
obtain polynomial-size extended formulations for the corresponding independence polytopes.
Fortunately, it turns out that Seymour's celebrated decomposition theorem provides suitable access for our purpose.
In order to state the result, we need to define a few operations on regular matroids.
Since we aim at describing how to actually construct extended formulations for regular matroids given by representing
matrices, we prefer to make use of the decomposition in terms of matrices (as you can find it, e.g., in Schrijver's
book~\cite{Schrijver86}) rather than in terms of purely structural matroid theory (as, e.g., in Oxley's
book~\cite{Oxley92}).
In what follows, all matrices and operations are considered over $ \F[2] $.
For convenience, we will therefore write $ \matroid[A] := \matroid[A][\F[2]] $ for any $ 0/1 $-matrix~$ A $.

Let $ \matroid $, $ \matroid_1 $ and $ \matroid_2 $ be binary matroids,
i.e., matroids represented over $ \F_2 $.
We say that $ \matroid $ is a \emph{1-sum} of $ \matroid_1 $ and $ \matroid_2 $ if there exist matrices $ A, B $ such
that
\begin{align*}
    \matroid_1 & = \matroid[A]\,,
    & \matroid_2 & = \matroid[B]\,,
    & \matroid & = \matroid \begin{pmatrix} A & \zerovec \\ \zerovec & B \end{pmatrix} \\
    \intertext{%
    holds.
    We say that $ \matroid $ is a \emph{2-sum} of $ \matroid_1 $ and $ \matroid_2 $ if there exist matrices $ A, B $
    and column vectors $ a, b $ such that
    }
    \matroid_1 & = \matroid \begin{pmatrix} a & A \end{pmatrix} \,,
    & \matroid_2 & = \matroid \begin{pmatrix} \transpose{b} \\ B \end{pmatrix}\,,
    & \matroid & = \matroid \begin{pmatrix} A & a \transpose{b} \\ \zerovec & B \end{pmatrix}\\
    \intertext{%
    holds.
    Finally, we say that $ \matroid $ is a \emph{3-sum} of $ \matroid_1 $ and $ \matroid_2 $ if there exist matrices
    $ A, B $ and column vectors $ a, b, c, d $ such that
    }
    \matroid_1 & = \matroid \begin{pmatrix} a & a & A \\ 0 & 1 & \transpose{c} \end{pmatrix} \,,
    & \matroid_2 & = \matroid \begin{pmatrix} 0 & 1 & \transpose{b} \\ d & d & B \end{pmatrix}\,,
    & \matroid & = \matroid \begin{pmatrix} A & a \transpose{b} \\ d \transpose{c} & B \end{pmatrix}
\end{align*}
holds.
In each of the above definitions, we allow~$ A $ to consist of no columns and~$ B $ to consist of no rows.
Seymour's characterization of regular matroids yields the following:

\begin{theorem}[{{Regular-Matroid Decomposition Theorem~\cite{Seymour80}, see~\cite[Thm. 19.6]{Schrijver86}}}]
    \label{thmSeymour}
    For every regular matroid $ \matroid $ there exists a rooted binary tree $ T $ whose nodes are binary matroids such that
    \begin{itemize}
        \item the root of $ T $ is $ \matroid $,
        \item each non-leaf node of $ T $ is isomorphic to a $ k $-sum of its two children for some $ k \in \{1,2,3\} $,
        \item each leaf of $ T $ is a graphic matroid, a cographic matroid or has size 
        at most ten.
        \item each leaf of $ T $ has a ground set of cardinality at least three, and
        \item whenever a non-leaf node of $ T $ is isomorphic to a $ 3 $-sum of its children, both children have 
        ground sets of
        cardinality at least seven.
    \end{itemize}
    Moreover, for input matrix $A \in \F[2]^{m \times n}$ such a decomposition of $\matroid = \matroid[A]$ can be
    computed in time polynomially bounded in~$ m $ and~$ n $.
\end{theorem}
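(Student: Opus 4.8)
This is a deep structural theorem, and I would not expect a short proof; the plan below follows Seymour's original strategy. The overall scheme is an induction on $|E(\matroid)|$ organized by connectivity. If $\matroid$ is not connected, it is the $1$-sum of its connected components; each of these is a regular matroid on a strictly smaller ground set (regularity is characterized by forbidden minors and is in particular closed under restriction), so we recurse. If $\matroid$ is connected but not $3$-connected, it possesses an exact $2$-separation, and the standard correspondence between $2$-separations and $2$-sums writes $\matroid$ as a $2$-sum of two regular matroids, each strictly smaller than $\matroid$; recurse on both. This reduces matters to the case in which $\matroid$ is $3$-connected. The side conditions in the statement --- leaves of size at least three, and both children of a $3$-sum of size at least seven --- are maintained by insisting throughout that the separations used be \emph{exact} and non-degenerate of the stated order, and by declaring any matroid that is small enough, or that admits no suitable separation, to be a leaf.

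For $3$-connected $\matroid$, I would use two classical ingredients. First, Tutte's excluded-minor theorems: a regular matroid is graphic unless it has a minor isomorphic to $\matroid^*(K_5)$ or $\matroid^*(K_{3,3})$, and is cographic unless it has a minor isomorphic to $\matroid[K_5]$ or $\matroid[K_{3,3}]$; hence if $\matroid$ is neither graphic nor cographic it simultaneously contains a minor from $\{\matroid[K_5],\matroid[K_{3,3}]\}$ and one from $\{\matroid^*(K_5),\matroid^*(K_{3,3})\}$. Second, Seymour's splitter theorem (the regular-matroid refinement of Tutte's wheels-and-whirls theorem) together with its consequences for the two sporadic matroids $R_{10}$ and $R_{12}$: $R_{10}$ is a splitter for the class of regular matroids, so a $3$-connected regular matroid with an $R_{10}$-minor is isomorphic to $R_{10}$ --- an admissible leaf of size ten; and any $3$-connected regular matroid with an $R_{12}$-minor has a non-degenerate exact $3$-separation, which via the correspondence between $3$-separations and $3$-sums writes $\matroid$ as a $3$-sum of two regular matroids each of size at least seven, on which we recurse.

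The technical heart is then the claim that a $3$-connected regular matroid that is neither graphic, nor cographic, nor isomorphic to $R_{10}$ must contain an $R_{12}$-minor. Granting this, the second ingredient applies and the induction closes. I expect this to be by far the hardest part: it is precisely the long, case-intensive portion of Seymour's paper, where one tracks the forbidden minors above together with the relevant $3$-separations through a careful sequence of reductions, all while controlling connectivity; I would not attempt to reproduce it here.

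For the algorithmic statement, each ingredient of the induction can be carried out constructively in time polynomial in $m$ and $n$: testing $k$-connectivity and exhibiting a separation of minimum order; testing, for each of the finitely many fixed matroids that occur ($R_{10}$, $R_{12}$, $\matroid[K_5]$, $\matroid[K_{3,3}]$ and their duals), whether it is a minor; and recognizing graphic and cographic matroids, e.g.\ via Tutte's graphicness algorithm or the Bixby--Cunningham algorithm. Since the recursion tree can be arranged to have only linearly many nodes (each leaf being chargeable to a distinct element of $E(\matroid)$), the whole decomposition is produced in polynomial time; this algorithmic refinement is due to Truemper.
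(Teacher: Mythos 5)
The paper does not prove this statement: it is imported verbatim from Seymour's decomposition theorem (via Schrijver's book), so there is no in-paper proof to compare yours against. Judged on its own terms, your outline is a faithful reconstruction of Seymour's strategy --- reduce to the $3$-connected case by $1$- and $2$-sums along low-order separations, use that $R_{10}$ is a splitter for regular matroids, and use the $R_{12}$ analysis to extract a $3$-sum --- and the algorithmic remarks (Bixby--Cunningham, Truemper, the treatment in Schrijver's Chapters 19--20) are the right pointers. But you explicitly decline to prove the central claim, namely that a $3$-connected regular matroid that is neither graphic nor cographic has an $R_{10}$- or $R_{12}$-minor (equivalently, in your formulation, an $R_{12}$-minor once $R_{10}$ itself is excluded via the splitter theorem). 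That claim, together with the splitter theorem and the $R_{12}$ three-separation lemma, \emph{is} the theorem; everything else in your sketch is comparatively routine bookkeeping. So as a proof the proposal has a declared gap at exactly the load-bearing step, and the honest conclusion is that it is an accurate roadmap to the literature rather than a proof.

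Two smaller points you would still need to tighten to complete the argument. First, the cardinality side conditions: to make both children of a $3$-sum have ground sets of size at least seven you need the exact $3$-separation $(X_1,X_2)$ to satisfy $|X_1|,|X_2|\ge 4$, since each child has $|X_i|+3$ elements (this is also exactly what makes both children proper minors); Seymour's $R_{12}$ lemma in fact delivers $|X_1|,|X_2|\ge 6$, but ``non-degenerate exact $3$-separation'' as you phrase it does not by itself guarantee this, nor does it by itself produce the two summands in the specific block-matrix form this paper uses to define a $3$-sum --- the correspondence between exact separations of a binary matroid and $2$-/$3$-sums is itself a lemma that has to be proved. Second, ``testing whether a fixed matroid is a minor'' of a binary matroid is not a brute-force triviality, since the candidate deletion and contraction sets are unbounded; the polynomial-time decomposition algorithm instead follows the constructive steps of Seymour's proof, as organized in Schrijver's Chapter 20 and in Truemper's work.
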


In Schrijver's book~\cite{Schrijver86}, the above statement is formulated in terms of
a decomposition theorem
for totally-unimodular matrices, for which he 
allows 
certain additional operations on matrices. Those are pivoting, permutations of rows or columns as well as scaling of rows or columns by $-1$, which all do not change the isomorphism type of the matroid. Furthermore, there are the operations of adding an all-zero row or column, adding a unit-vector as a row or column, or repeating a
row or column. It can be easily seen that these ones can be obtained (up to isomorphism) as $ 2 $-sums with
certain 
matroids on ground sets of cardinality three.

Finally, Schrijver uses the
transposition of a matrix as a particular operation, which, in the strict sense, may be performed at any node in the decomposition tree.
Technically, this yields a decomposition tree~$ T $ in which every non-leaf node $ \matroid $ is a $ k $-sum of its two children, or it has only one child $ \matroid' $ with $ \matroid = \matroid[A] $ and $ \matroid' = \matroid[\transpose{A}] $ for some matrix $A \in \F[2]^{m \times n}$.
Let us argue that we can modify $ T $ such that no transposition has to be performed at all.
Let $ \matroid $ be a node that has only one child $ \matroid' $.
If $ \matroid' $ also has exactly one child $ \matroid'' $, then $ \matroid = \matroid'' $ and we can shorten the tree.
If $ \matroid' $ has two children, it is a $ k $-sum.
It is easy to see that also $ \matroid $ is a $ k $-sum of two regular matroids $ \matroid_1 $ and $ \matroid_2 $ on ground sets of corresponding sizes.
Since $ \matroid_1 $ and $ \matroid_2 $ are regular matroids, there exist decomposition trees (that still may include transpositions) $ T_1 $ and $ T_2 $ with $ \matroid_1 $ and $ \matroid_2 $ as roots, respectively.
In this case, we delete the subtree starting at $ \matroid' $ and connect $ \matroid $ with $ T_1 $ and $ T_2 $.
Using this procedure, we end up with a decomposition tree for which all nodes $ \matroid $ with only child $ \matroid' $ satisfy that $ \matroid' $ is a leaf node.
Since the transposition of a matrix corresponds to taking the dual of the induced matroid, we can remove all such leaf nodes and obtain a decomposition tree as in Theorem~\ref{thmSeymour}.

\section{Main Result}
\label{secResult}

As described in Section~\ref{secGraphic}, we already have
small
extended formulations for the independence
polytopes of the leaf nodes in decompositions as in Theorem~\ref{thmSeymour}.
(Note that the leaf nodes that are not graphic or cographic have size bounded by some constant.)
Given a decomposition tree, we
are going to construct an extended formulation for the independence polytope of the root
node whose size is
small in terms of
the sum of the sizes of the extended formulations of the independence polytopes of the
leaf nodes.
In order to deduce our main result from such a construction later,
we first bound the sizes of components in decomposition trees.

\begin{lemma}
    \label{lemSize}
    Let~$ \matroid $ be a regular matroid on ground set~$ E $, and let~$ T $ be a decomposition tree of~$ \matroid $ as
    in Theorem~\ref{thmSeymour}.
    Then the sum of the cardinalities of the ground sets of leaf nodes of~$ T $ can be bounded linearly in~$ |E| $.
\end{lemma}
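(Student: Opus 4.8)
Here is a proof plan.

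The plan is to reduce the statement to the arithmetic of ground‑set cardinalities under $k$-sums. Write $n(v)$ for the cardinality of the ground set of the matroid sitting at a node $v$ of $T$. From the three pairs of displayed matrices in the definitions of $1$-, $2$- and $3$-sums I would first read off that if $v$ is a $k$-sum of its children $v_1,v_2$, then
\[
    n(v) \;=\; n(v_1) + n(v_2) - \delta_k, \qquad\text{where } \delta_1 = 0,\ \delta_2 = 2,\ \delta_3 = 6 ,
\]
each identity being just ``number of rows plus number of columns'' of the relevant matrices. Together with the two cardinality constraints in Theorem~\ref{thmSeymour} (every leaf has $n \ge 3$, and both children of a $3$-sum node have $n \ge 7$), a one-line induction on the height of a subtree would then give that \emph{every} node of $T$ satisfies $n(v)\ge 3$ and in fact $n(v) \ge n(v_i) + 1$ for each child $v_i$; so ground-set sizes strictly decrease along every root-to-leaf path, which makes the structural induction below well founded.

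The core step would be to bound $f(v) := \sum_{\ell} n(\ell)$, the sum over the leaves of the subtree rooted at $v$. It is worth noting first why no \emph{linear} estimate $f(v) \le c\,n(v) - d$ can survive the induction: the $3$-sum case forces $6c \le d$, whereas a leaf of cardinality exactly $3$ forces $d \le 3(c-1)$, and these are incompatible. The remedy is a convex, piecewise-linear comparison function that has only a modest slope near $t = 3$ but a large slope for $t \ge 7$ -- which is legitimate precisely because the steep regime is only ever reached at children of $3$-sums, which are guaranteed cardinality $\ge 7$. Concretely I would set
\[
    \psi(t) \;:=\; \max\{\,3t - 6,\ 15t - 90\,\}
\]
(so $\psi(t) = 3t-6$ for $t \le 7$ and $\psi(t) = 15t - 90$ for $t \ge 7$; note $\psi$ is convex, $\psi(0) = -6 \le 0$, and $\psi(t) \ge t$ for all $t \ge 3$), and prove by induction on the height of $v$ that $f(v) \le \psi(n(v))$.

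For a leaf this is immediate since $f(v) = n(v) \le \psi(n(v))$. For the inductive step, if $v$ is a $k$-sum of $v_1,v_2$ then $f(v) = f(v_1) + f(v_2) \le \psi(n(v_1)) + \psi(n(v_2))$ by the inductive hypothesis, so it suffices to check the super-additivity-type inequality
\[
    \psi(a) + \psi(b) \;\le\; \psi(a + b - \delta_k)
\]
for $a = n(v_1)$, $b = n(v_2)$. For $k=1$ ($\delta_1=0$, $a,b\ge 3$) this is super-additivity of $\psi$, which holds for any convex function with value $\le 0$ at $0$. For $k=2$ ($\delta_2 = 2$, $a,b\ge 3$) I would split into the case $\min\{a,b\}\le 7$, where $\psi$ agrees with the line $3t-6$ on the smaller argument and the inequality follows from ``$\psi$ has slope $\ge 3$ everywhere'', and the case $a,b\ge 8$, where $\psi$ is the single line $15t-90$ and the inequality is a one-line computation. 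For $k=3$ ($\delta_3 = 6$, $a,b\ge 7$) all of $a$, $b$ and $a+b-6\ge 8$ lie in the steep regime, where $\psi(a)+\psi(b) = 15(a+b) - 180 = \psi(a+b-6)$, with equality. Evaluating the bound at the root (whose matroid is $\mathcal{M}$, with $n(\text{root}) = |E|$) gives $\sum_{\ell} n(\ell) = f(\text{root}) \le \psi(|E|) \le 15\,|E|$, the desired linear bound.

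I expect the only real obstacle to be the design of $\psi$: because a $3$-sum discards a comparatively large number ($6$) of elements relative to the minimum leaf size ($3$), no uniform linear estimate is invariant under the induction, and one is forced to track a two-regime bound and to use the ``$\ge 7$'' hypothesis on $3$-sum children in an essential way. Once the right $\psi$ is fixed, the three inequalities above are routine arithmetic.
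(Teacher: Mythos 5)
Your proposal is correct and follows essentially the same route as the paper: both arguments rest on the identity $n_1+n_2-n\in\{0,2,6\}$ for $k$-sums together with the ``leaves have size $\ge 3$'' and ``$3$-sum children have size $\ge 7$'' constraints, and both establish the bound $15n-90$ (for $n\ge 7$) by induction over the decomposition tree. The only cosmetic difference is that the paper defines the worst-case function $g$ by a recursion and then solves it, whereas you guess the comparison function $\psi(t)=\max\{3t-6,\,15t-90\}$ (which coincides with the paper's $g$) and verify the required superadditivity inequalities directly.
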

\begin{proof}
    Let~$ f(n) $ denote the largest sum of cardinalities of the ground sets of leaf nodes in any decomposition tree as
    in Theorem~\ref{thmSeymour} for a regular matroid whose ground set has cardinality~$ n $.
    Defining
    \begin{align*}
        g(n) := \max \big( \{ n \} & \cup \{ g(t) + g(n-t) \mid 3 \le t \le n - 3 \} \\
        & \cup \{ g(t+1) + g(n-t+1) \mid 2 \le t \le n - 2 \} \\
        & \cup \{ g(t+3) + g(n-t+3) \mid 4 \le t \le n - 4 \} \big)
    \end{align*}
    for all $ n \ge 2 $, and setting $ g(1) := 1 $, we read off from Theorem~\ref{thmSeymour} that we have $ f(n) \le
    g(n) $ for all $ n \ge 1 $
    (note that whenever a node  with ground set of size $ n $ is the 1-, 2-, or 3-sum of its two children with ground sets of sizes $ n_1 $ and $ n_2 $, then $ n_1 + n_2 $ equals $ n $ minus $ 0 $, $ 2 $, or $ 6 $, respectively).
    Inspecting the function $ g $ more closely, we find that we have $ g(7) = 15 $, $ g(8) = 30 $ and
    \[
        g(n) = \max \big( \{ g(t+3) + g(n-t+3) \mid 4 \le t \le n - 4 \} \big),
    \]
    for all $ n \ge 9 $.
    From this one deduces $ g(n) = 15(n-6) $ for all $ n \ge 7 $ by induction.
\qed
\end{proof}

Next, for our construction it is necessary to characterize the independent sets of $ k $-sums as defined above.
For the sake of completeness, we include a full proof of  the following lemma in matrix language instead of deriving the statements from known results in structural matroid theory. 
We use the symbol $ \uplus $ in order to emphasize when a union is taken of two sets with empty intersection.

\begin{lemma}
    \label{lemSums}
    Let $ \matroid = (E, \mathcal{I}) $, $ \matroid_1 = (E_1, \mathcal{I}_1) $, and $ \matroid_2 = (E_2, \mathcal{I}_2) $
    be binary matroids with $ E_1 \cap E_2 = \emptyset $ such that $ \matroid $ is a $ k $-sum of $ \matroid_1 $ and $
    \matroid_2 $.
    Then the independent sets of $ \matroid $ can be characterized (up to isomorphism) as follows:
    \begin{itemize}
        \item $ k = 1 $: $ \mathcal{I} = \setdef{I_1 \disjunion I_2}[I_1 \in \mathcal{I}_1, \, I_2 \in \mathcal{I}_2] $
        \item $ k = 2 $: $ \matroid $ is a $ 1 $-sum of a minor of $ \matroid_1 $ and a minor of $ \matroid_2 $; or
        there exist elements $ r_1 \in E_1 $, $ r_2 \in E_2 $ satisfying
        \begin{align*}
            \mathcal{I} = \big\{(I_1 \setminus \{r_1\}) \disjunion (I_2 \setminus \{r_2\}) : \,
            & I_1 \in \mathcal{I}_1, \, I_2 \in \mathcal{I}_2, \\
            & | I_1 \cap \{r_1\} | + | I_2 \cap \{r_2\} | = 1 \big\}.
        \end{align*}
        \item $ k = 3 $: $ \matroid $ is a $ 2 $-sum of a minor of $ \matroid_1 $ and a minor of $ \matroid_2 $; or
        there exist 
        pairwise distinct 
        elements $ r_1,p_1,q_1 \in E_1 $, $ r_2,p_2,q_2 \in E_2 $ satisfying
        \begin{align*}
            \mathcal{I} = \big\{(I_1 \setminus \{r_1,p_1,q_1\}) \disjunion (I_2 \setminus \{r_2,p_2,q_2\}) : \,
            & I_1 \in \mathcal{I}_1, \, I_2 \in \mathcal{I}_2, \\
            & | I_1 \cap \{r_1\} | + | I_2 \cap \{r_2\} | = 1, \\
            & | I_1 \cap \{p_1\} | + | I_2 \cap \{p_2\} | = 1, \\
            & | I_1 \cap \{q_1\} | + | I_2 \cap \{q_2\} | = 1 \big\}.
        \end{align*}
    \end{itemize}
\end{lemma}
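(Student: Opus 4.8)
The plan is to argue everything over $\F[2]$, using throughout that a set $I$ is independent in $\matroid[M]$ exactly when the columns of the identity extension $(\idmat,M)$ indexed by $I$ are linearly independent over $\F[2]$. For each $k\in\{1,2,3\}$ I would start from concrete matrices $A,B$ (and vectors $a,b,c,d$) realizing the $k$-sum, write the identity extension of $\matroid$ in the corresponding block form, and fix the natural bijection identifying $E$ with a disjoint union of all but a few elements of $E_1$ and $E_2$: for $k=1$ this is $E=E_1\disjunion E_2$; for $k=2$ one removes the distinguished $r_1\in E_1$ (the element indexed by the column $a$) and $r_2\in E_2$ (the element indexed by the first identity column of $\matroid_2$'s extension); for $k=3$ one removes a distinguished three-element set from each side --- in $\matroid_1$ the last identity column and the two copies of $a$, which sum to $\zerovec$ and hence form a triangle when $a\neq\zerovec$, and in $\matroid_2$ the first identity column and the two copies of $d$. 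The case $k=1$ is then immediate: after a column permutation the identity extension of $\matroid$ is block-diagonal with blocks $(\idmat,A)$ and $(\idmat,B)$, so a column set is independent iff both of its halves are.

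For $k=2$ I would first dispose of the degenerate case: if $a\transpose{b}=\zerovec$, then $\matroid$ is literally a $1$-sum of $\matroid[A]$ and $\matroid[B]$, and these equal $\matroid_1\setminus r_1$ and $\matroid_2/r_2$, hence are minors. In the remaining case put $I_i':=I\cap(E_i\setminus\{r_i\})$, and note that any linear relation among the columns of $\matroid$ indexed by $I$ decouples into a top-block and a bottom-block equation whose only interaction is through the scalar $\sigma:=\sum_j\kappa_jb_j$ formed from the coefficients used on the $\matroid_2$-columns, which enters the top block as $\sigma\,a$; the point is that $\sigma$ is exactly the $r_2$-coordinate this partial combination would produce in the identity extension of $\matroid_2$. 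From this one gets: $I_i'$ is independent in $\matroid$ iff it is in $\matroid_i$; for the ``if'' direction, a hypothetical relation of $(I_1\setminus r_1)\disjunion(I_2\setminus r_2)$ in $\matroid$ lifts to relations of $I_1$ in $\matroid_1$ and $I_2$ in $\matroid_2$ (feeding $\sigma$, resp.\ the matching coordinate, into the distinguished column), so independence of whichever of $I_1,I_2$ carries its distinguished element, and then of the other, forces all coefficients to vanish; for the ``only if'' direction, if $I$ is independent then $I_1',I_2'$ are independent in the $\matroid_i$ and at least one of $I_1'\cup\{r_1\}$, $I_2'\cup\{r_2\}$ remains independent --- otherwise the two witnessing relations recombine, via the same $\sigma a$ coupling, into a relation of $I$ in $\matroid$. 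Choosing $I_1,I_2$ accordingly yields the asserted description of $\mathcal{I}$.

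For $k=3$ I would run the same template with the triangle $\{r_i,p_i,q_i\}$ in place of $r_i$. In degenerate configurations $\matroid$ is exhibited as a $2$-sum of explicit minors: if $a\transpose{b}=\zerovec$ or $d\transpose{c}=\zerovec$ one off-diagonal block disappears and, after transposing the relevant blocks (allowed by the remarks following Theorem~\ref{thmSeymour}), $\matroid$ takes standard $2$-sum form with components that are deletions and contractions of $\matroid_1$ and $\matroid_2$; and if the triangle fails to be a proper $3$-separator of some $\matroid_i$ --- which, beyond those vanishings, also occurs for low-rank $\matroid_i$, e.g.\ when a triangle element is parallel to an element outside the triangle --- an analogous reduction again yields a $2$-sum of smaller minors. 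In the non-degenerate case the span of $\{r_i,p_i,q_i\}$ in $\matroid_i$ is a fixed two-dimensional subspace $W_i$, of a shape rigid enough that the coupling between the two blocks of $\matroid$ --- through $\sigma=\sum_j\kappa_jb_j$ entering the top block as $\sigma a$, and $\tau=\sum_j\lambda_jc_j$ entering the bottom block as $\tau d$ --- always lands inside $W_1$, resp.\ $W_2$; hence any relation of $\matroid$ can be absorbed, on each side, into a combination of the three triangle columns. The three matching conditions are precisely the record of which triangle columns are available on which side, and, exactly as for $k=2$, this is what lets a relation of $\matroid$ be split into relations of $\matroid_1$ and $\matroid_2$ and conversely recombined.

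I expect the $k=3$ case to be the main obstacle, on two counts. First, the dichotomy ``degenerate triangle, hence $2$-sum of minors'' versus ``proper triangle, hence the formula'' must be formulated just right: it is strictly stronger than $a,d\neq\zerovec$, since small or low-rank $\matroid_i$ genuinely fall on the $2$-sum side --- and there the clause ``up to isomorphism'' matters, because the minors realizing the $2$-sum need not live on the naive ground-set split. Second, the bookkeeping of the triangle coefficients --- a one-parameter family of admissible ways to carry a relation across each triangle, constrained by the three matching conditions --- is where the real work lies; once that is in place, the $k=1$ and $k=2$ parts and the degenerate reductions are routine.
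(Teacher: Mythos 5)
Your plan is essentially the paper's proof: identity extensions over $\F[2]$, block decomposition of a linear relation in which the two sides are coupled only through $\sigma a$ (resp.\ $\tau d$), reduction of the degenerate cases to $1$-/$2$-sums of minors, translation of independence in $\matroid$ into the disjunctions ``$J_1\cup\{r_1\}\in\mathcal{I}_1$ or $J_2\cup\{r_2\}\in\mathcal{I}_2$'' (and likewise for the $p$'s and $q$'s), and a final downward-closure step converting the resulting ``$\ge 1$'' conditions into ``$=1$''. Your choice of distinguished elements and triangles matches the paper's, and your $k=1$ and $k=2$ arguments are sound.

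Where you go astray is the $k=3$ dichotomy, which you insist must be ``strictly stronger than $a,d\ne\zerovec$'' and whose non-degenerate half you leave as the main open obstacle. In fact $a\ne\zerovec$ and $d\ne\zerovec$ is exactly the right split, and no further hypotheses (rank, connectivity, absence of parallel elements, the triangle being a ``proper $3$-separator'') are needed --- nor could any be assumed, since the lemma applies to arbitrary binary matroids realizing the $3$-sum pattern. The bookkeeping you defer collapses to a single observation: every vector in the span of \emph{all} columns of $\bigl(\begin{smallmatrix}\idmat & A\\ \zerovec & d\transpose{c}\end{smallmatrix}\bigr)$ has bottom part in $\{\zerovec,d\}$, and every vector in the span of all columns of $\bigl(\begin{smallmatrix}\zerovec & a\transpose{b}\\ \idmat & B\end{smallmatrix}\bigr)$ has top part in $\{\zerovec,a\}$, so the two spans intersect inside the four-element set consisting of $\zerovec$ and the three stacked vectors built from $a$ and $d$. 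The trivial-intersection criterion therefore splits into three independent disjunctions, one per nonzero vector of this set, and each disjunct is (using only $a\ne\zerovec$ and $d\ne\zerovec$) literally the statement that the corresponding triangle element can be added to $J_1$ or to $J_2$ while preserving independence. Parallel or low-rank configurations cause no harm because these equivalences are purely span-theoretic; the extra cases you expected to fall onto the ``$2$-sum of minors'' side do not exist.
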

\begin{proof}
    Note that the statement for the case~$ k = 1 $ follows trivially from the definition of a $ 1 $-sum.
    Let us consider the case~$ k = 2 $ and suppose that we have
    $ \matroid_1 = \matroid \left( \begin{smallmatrix} a & A \end{smallmatrix} \right) $,
    $ \matroid_2 = \matroid \left( \begin{smallmatrix} \transpose{b} \\ B \end{smallmatrix} \right) $ and
    $ \matroid = \matroid \left( \begin{smallmatrix} A & a \transpose{b} \\ \zerovec & B \end{smallmatrix} \right) $.
    The identity-extension of $ \left( \begin{smallmatrix} A & a \transpose{b} \\ \zerovec & B \end{smallmatrix} \right)
    $ (after permuting columns) is
    \[
        \begin{pmatrix}
            \idmat   & A         & \zerovec & a\transpose{b} \\
            \zerovec & \zerovec  & \idmat   & B
        \end{pmatrix}.
    \]
    Denote the elements corresponding to the first column of~$ \left( \begin{smallmatrix} a & A \end{smallmatrix}
    \right) $ and the first column of the identity-extension of $ \left( \begin{smallmatrix} \transpose{b} \\ B
    \end{smallmatrix} \right) $ (being the first unit vector) by~$r_1$ and~$r_2$, respectively.
    With this notation, we may assume that we have $ E = (E_1 \setminus \{ r_1 \}) \disjunion (E_2 \setminus \{ r_2 \}) $.
    In addition, note that if~$ a = \zerovec $ holds, then~$ \matroid $ is a $ 1 $-sum of $ \matroid[A] $ and $
    \matroid[B] $, which are minors of $ \matroid_1 $ and $ \matroid_2 $, respectively.
    Thus, we may further assume that~$ a \ne \zerovec $ holds and obtain that a subset of~$ E $ is independent (in~$
    \matroid $) if and only if it is of the form $ J_1 \disjunion J_2 $ with $ J_1 = I_1 \setminus \{ r_1 \} $ and $ J_2 = I_2
    \setminus \{ r_2 \} $ where $ I_1 \in \mathcal{I}_1 $ and (due to $ a \ne \zerovec $) $ I_2 \in \mathcal{I}_2 $ such
    that
    \[
        \ispan{\begin{pmatrix}\idmat & A \\ \zerovec & \zerovec\end{pmatrix}}[J_1]
        \cap
        \ispan{\begin{pmatrix}\zerovec & a\transpose{b} \\ \idmat & B \end{pmatrix}}[J_2]
        =\{\zerovec\}
    \]
    holds, 
    where $ \ispan{(\cdot)}[J] $ denotes the $ \F_2 $-subspace spanned by the columns corresponding to $ J $. 
    Because we have 
    (with $\ispan{(\cdot)}$ denoting the $ \F_2 $-subspace spanned by all columns)
    \[
        \ispan{\begin{pmatrix}\idmat & A \\ \zerovec & \zerovec\end{pmatrix}}
        \cap
        \ispan{\begin{pmatrix}\zerovec & a\transpose{b} \\ \idmat & B \end{pmatrix}}
        \setminus
        \{\zerovec\}
        \subseteq
        \left\{
            \begin{pmatrix}a \\ \zerovec\end{pmatrix}
        \right\}\,,
    \]
    the latter condition is equivalent to
    \[
        \begin{pmatrix}a \\ \zerovec\end{pmatrix}
        \not\in
        \ispan{\begin{pmatrix}\idmat & A \\ \zerovec & \zerovec\end{pmatrix}}[J_1]
        \quad\text{or}\quad
        \begin{pmatrix}a \\ \zerovec\end{pmatrix}
        \not\in 
        \ispan{\begin{pmatrix}\zerovec & a\transpose{b} \\ \idmat & B \end{pmatrix}}[J_2]\,,
    \]
    which is equivalent to (recall $ a \ne \zerovec$)
    \[
        a \not \in \ispan{\begin{pmatrix}\idmat & A\end{pmatrix}}[J_1]
        \quad \text{or} \quad
        \begin{pmatrix}1 \\ \zerovec\end{pmatrix}
        \not\in
        \ispan{\begin{pmatrix}\transpose{\zerovec} & \transpose{b} \\ \idmat & B \end{pmatrix}}[J_2]\,,
    \]
    and thus to
    \[
        J_1 \cup \{r_1\} \in \mathcal{I}_1
        \quad \text{or} \quad
        J_2 \cup \{r_2\} \in \mathcal{I}_2.
    \]
    Hence, we obtain
    \begin{align*}
        \mathcal{I} = \big\{(I_1 \setminus \{r_1\}) \disjunion (I_2 \setminus \{r_2\}) : \,
        & I_1 \in \mathcal{I}_1, \, I_2 \in \mathcal{I}_2, \\
        & | I_1 \cap \{r_1\} | + | I_2 \cap \{r_2\} | \ge 1 \big\} \\
        = \big\{(I_1 \setminus \{r_1\}) \disjunion (I_2 \setminus \{r_2\}) : \,
        & I_1 \in \mathcal{I}_1, \, I_2 \in \mathcal{I}_2, \\
        & | I_1 \cap \{r_1\} | + | I_2 \cap \{r_2\} | = 1 \big\},
    \end{align*}
    where the last equality follows from the fact that
    \[
        (E_1 \disjunion E_2, \setdef{I_1 \disjunion I_2}[I_1 \in \mathcal{I}_1, \, I_2 \in \mathcal{I}_2])
    \]
    is an independence system
    (in fact, a matroid that is the direct sum of matroids).

    For the remaining case $ k = 3 $, let
    $ \matroid_1 = \matroid \left( \begin{smallmatrix} a & a & A \\ 0 & 1 & \transpose{c} \end{smallmatrix} \right) $,
    $ \matroid_2 = \matroid \left( \begin{smallmatrix} 0 & 1 & \transpose{b} \\ d & d & B \end{smallmatrix} \right) $
    and
    $ \matroid = \matroid \left( \begin{smallmatrix} A & a \transpose{b} \\ d \transpose{c} & B \end{smallmatrix}
    \right) $.
    The identity extension of
    $ \left( \begin{smallmatrix} A & a \transpose{b} \\ d \transpose{c} & B \end{smallmatrix} \right) $
    (after permuting columns) is
    \[
        \begin{pmatrix}
            \idmat   & A              & \zerovec & a\transpose{b} \\
            \zerovec & d\transpose{c} & \idmat   & B
        \end{pmatrix}.
    \]
    Let us denote certain elements corresponding to the columns of the identity extensions of
    $ \left( \begin{smallmatrix} a & a & A \\ 0 & 1 & \transpose{c} \end{smallmatrix} \right) $
    and
    $ \left( \begin{smallmatrix} 0 & 1 & \transpose{b} \\ d & d & B \end{smallmatrix} \right) $,
    respectively, as follows:
    \begin{center}
        \begin{tikzpicture}
            \matrix [matrix of math nodes,left delimiter=(,right delimiter=)] (m)
            {
                1 & \zerovec & a & a & A \\
                \zerovec & \idmat & 0 & 1 & \transpose{c} \\
            };

            \node[below=5mm of m-1-1] (p1end) {};
            \node[below=12mm of m-1-1] (p1start) {};
            \draw[->] (p1start) -- (p1end);
            \node[below=12mm of m-1-1] {$p_1$};

            \node[below=5mm of m-1-3] (r1end) {};
            \node[below=12mm of m-1-3] (r1start) {};
            \draw[->] (r1start) -- (r1end);
            \node[below=12mm of m-1-3] {$r_1$};

            \node[below=5mm of m-1-4] (q1end) {};
            \node[below=12mm of m-1-4] (q1start) {};
            \draw[->] (q1start) -- (q1end);
            \node[below=12mm of m-1-4] {$q_1$};

            \begin{scope}[xshift=5cm]

                \matrix [matrix of math nodes,left delimiter=(,right delimiter=)] (n)
                {
                    1 & \zerovec & 0 & 1 & \transpose{b} \\
                    \zerovec & \idmat & d & d & B \\
                };

                \node[below=5mm of n-1-1] (r2end) {};
                \node[below=12mm of n-1-1] (r2start) {};
                \draw[->] (r2start) -- (r2end);
                \node[below=12mm of n-1-1] {$r_2$};

                \node[below=5mm of n-1-3] (p2end) {};
                \node[below=12mm of n-1-3] (p2start) {};
                \draw[->] (p2start) -- (p2end);
                \node[below=12mm of n-1-3] {$p_2$};

                \node[below=5mm of n-1-4] (q2end) {};
                \node[below=12mm of n-1-4] (q2start) {};
                \draw[->] (q2start) -- (q2end);
                \node[below=12mm of n-1-4] {$q_2$};

            \end{scope}
        \end{tikzpicture}
    \end{center}
    With this notation, we may assume that we have $ E = (E_1 \setminus \{ r_1, p_1, q_1 \}) \disjunion (E_2 \setminus \{
    r_2, p_2, q_2 \}) $.
    In addition, note that if~$ d = \zerovec $ holds, then~$ \matroid $ is a $ 2 $-sum of
    $ \matroid \left( \begin{smallmatrix} a & A \end{smallmatrix} \right) $
    and
    $ \matroid \left( \begin{smallmatrix} \transpose{b} \\ B \end{smallmatrix} \right) $,
    which are minors of $ \matroid_1 $ and $ \matroid_2 $, respectively.
    A similar argument holds for the case $ a = \zerovec $.
    Thus, we may further assume that~$ d \ne \zerovec $ and~$ a \ne \zerovec $ holds.
    In this case, a subset of~$ E $ is independent (in $ \matroid $) if and only if it is of the form $ J_1 \disjunion
    J_2 $ with $ J_1 = I_1 \setminus \{ r_1,p_1,q_1 \} $ and $ J_2 = I_2 \setminus \{ r_2,p_2,q_2 \} $ where (due to~$ d
    \ne \zerovec $) $ I_1 \in \mathcal{I}_1 $ and (due to $ a \ne \zerovec $) $ I_2 \in \mathcal{I}_2 $ such that
    \[
        \ispan{\begin{pmatrix}\idmat & A \\ \zerovec & d\transpose{c}\end{pmatrix}}[J_1]
        \cap
        \ispan{\begin{pmatrix}\zerovec & a\transpose{b} \\ \idmat & B \end{pmatrix}}[J_2]
        =\{\zerovec\}
    \]
    holds.
    Because we have
    \[
        \ispan{\begin{pmatrix}\idmat & A \\ \zerovec & d\transpose{c}\end{pmatrix}}
        \cap
        \ispan{\begin{pmatrix}\zerovec & a\transpose{b} \\ \idmat & B \end{pmatrix}}
        \setminus
        \{\zerovec\}
        \subseteq
        \left\{
            \begin{pmatrix} a        \\ \zerovec \end{pmatrix},
            \begin{pmatrix} \zerovec \\ d        \end{pmatrix},
            \begin{pmatrix} a        \\ d        \end{pmatrix}
        \right\}\,,
    \]
    the latter condition is equivalent to
    \begin{align*}
        \Big[
        \begin{pmatrix}a \\ \zerovec \end{pmatrix}
        \not\in 
        \ispan{\begin{pmatrix}\idmat & A \\ \zerovec & d\transpose{c}\end{pmatrix}}[J_1]
        & \quad\text{or}\quad
        \begin{pmatrix}a \\ \zerovec\end{pmatrix}
        \not\in 
        \ispan{\begin{pmatrix}\zerovec & a\transpose{b} \\ \idmat & B \end{pmatrix}}[J_2]
        \Big]
        \quad\text{and}\\
        \Big[
        \begin{pmatrix}\zerovec \\ d \end{pmatrix}
        \not\in 
        \ispan{\begin{pmatrix}\idmat & A \\ \zerovec & d\transpose{c}\end{pmatrix}}[J_1]
        & \quad\text{or}\quad
        \begin{pmatrix}\zerovec \\ d \end{pmatrix}
        \not\in 
        \ispan{\begin{pmatrix}\zerovec & a\transpose{b} \\ \idmat & B \end{pmatrix}}[J_2]
        \Big]
        \quad\text{and}\\
        \Big[
        \begin{pmatrix}a \\ d \end{pmatrix}
        \not\in 
        \ispan{\begin{pmatrix}\idmat & A \\ \zerovec & d\transpose{c}\end{pmatrix}}[J_1]
        & \quad\text{or}\quad
        \begin{pmatrix}a \\ d \end{pmatrix}
        \not\in 
        \ispan{\begin{pmatrix}\zerovec & a\transpose{b} \\ \idmat & B \end{pmatrix}}[J_2]
        \Big]\,,
    \end{align*}
    which due to $a\ne\zerovec$ and $d\ne\zerovec$ is equivalent to
    \begin{align*}
        \Big[
        \begin{pmatrix}a \\ 0 \end{pmatrix}
        \not\in 
        \ispan{\begin{pmatrix}\idmat & A \\ \transpose{\zerovec} & \transpose{c}\end{pmatrix}}[J_1]
        & \quad\text{or}\quad
        \begin{pmatrix}1 \\ \zerovec\end{pmatrix}
        \not\in 
        \ispan{\begin{pmatrix}\transpose{\zerovec} & \transpose{b} \\ \idmat & B \end{pmatrix}}[J_2]
        \Big]
        \quad\text{and}\\
        \Big[
        \begin{pmatrix}\zerovec \\ 1 \end{pmatrix}
        \not\in 
        \ispan{\begin{pmatrix}\idmat & A \\ \transpose{\zerovec} & \transpose{c}\end{pmatrix}}[J_1]
        & \quad\text{or}\quad
        \begin{pmatrix}0 \\ d \end{pmatrix}
        \not\in 
        \ispan{\begin{pmatrix}\transpose{\zerovec} & \transpose{b} \\ \idmat & B \end{pmatrix}}[J_2]
        \Big]
        \quad\text{and}\\
        \Big[
        \begin{pmatrix}a \\ 1 \end{pmatrix}
        \not\in 
        \ispan{\begin{pmatrix}\idmat & A \\ \transpose{\zerovec} & \transpose{c}\end{pmatrix}}[J_1]
        & \quad\text{or}\quad
        \begin{pmatrix}1 \\ d \end{pmatrix}
        \not\in 
        \ispan{\begin{pmatrix}\transpose{\zerovec} & \transpose{b} \\ \idmat & B \end{pmatrix}}[J_2]
        \Big]\,,
    \end{align*}
    and thus to
    \begin{align*}
        \big[
        J_1\cup\{r_1\}\in\mathcal{I}_1
        & \quad\text{or}\quad
        J_2\cup\{r_2\}\in\mathcal{I}_2
        \big]
        \quad\text{and}\\
        \big[
        J_1\cup\{p_1\}\in\mathcal{I}_1
        & \quad\text{or}\quad
        J_2\cup\{p_2\}\in\mathcal{I}_2
        \big]
        \quad\text{and}\\
        \big[
        J_1\cup\{q_1\}\in\mathcal{I}_1
        & \quad\text{or}\quad
        J_2\cup\{q_2\}\in\mathcal{I}_2
        \big].
    \end{align*}
    Hence, we obtain
    \begin{align*}
        \mathcal{I} = \big\{(I_1 \setminus \{r_1,p_1,q_1\}) \disjunion (I_2 \setminus \{r_2,p_2,q_2\}) : \,
        & I_1 \in \mathcal{I}_1, \, I_2 \in \mathcal{I}_2, \\
        & | I_1 \cap \{r_1\} | + | I_2 \cap \{r_2\} | \ge 1, \\
        & | I_1 \cap \{p_1\} | + | I_2 \cap \{p_2\} | \ge 1, \\
        & | I_1 \cap \{q_1\} | + | I_2 \cap \{q_2\} | \ge 1 \big\} \\
        = \big\{(I_1 \setminus \{r_1,p_1,q_1\}) \disjunion (I_2 \setminus \{r_2,p_2,q_2\}) : \,
        & I_1 \in \mathcal{I}_1, \, I_2 \in \mathcal{I}_2, \\
        & | I_1 \cap \{r_1\} | + | I_2 \cap \{r_2\} | = 1, \\
        & | I_1 \cap \{p_1\} | + | I_2 \cap \{p_2\} | = 1, \\
        & | I_1 \cap \{q_1\} | + | I_2 \cap \{q_2\} | = 1 \big\},
    \end{align*}
    where the last equality again follows from the fact that
    \[
        (E_1 \disjunion E_2, \setdef{I_1 \disjunion I_2}[I_1 \in \mathcal{I}_1,\, I_2 \in \mathcal{I}_2])
    \]
    is an independence system.
\qed
\end{proof}

Finally, we bound the extension complexities of independence polytopes of $ k $-sums in terms of their summands.

\begin{lemma}
    \label{lemXcSums}
    Let $ \matroid $ be a $ k $-sum of~$ \matroid_1 $ and~$ \matroid_2 $ for some $ k \in \{1,2,3\} $.
    Then we have
    \[
        \xc(\indPoly{\matroid}) \le \xc(\indPoly{\matroid_1}) + \xc(\indPoly{\matroid_2})).
    \]
\end{lemma}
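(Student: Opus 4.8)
The plan is to build, for each $k\in\{1,2,3\}$, an extension of $\indPoly{\matroid}$ by gluing extensions of $\indPoly{\matroid_1}$ and $\indPoly{\matroid_2}$ along the few interface elements isolated in Lemma~\ref{lemSums}. I will use repeatedly that the extension complexity of a Cartesian product is at most the sum of the extension complexities of its factors, and that neither cutting a polytope by an affine subspace (in an extended formulation this only adds equations, not inequalities) nor projecting onto a subset of coordinates increases $\xc$. In particular $\xc(\indPoly{\matroid'})\le\xc(\indPoly{\matroid})$ for every minor $\matroid'$ of $\matroid$, since deleting a non-loop element amounts to passing to the face $\{x_e=0\}$ and forgetting coordinate $e$, while contracting a non-loop $e$ amounts to passing to the section $\{x_e=1\}$ and forgetting $e$; iterating handles arbitrary minors. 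For $k=1$ this already finishes the job, as Lemma~\ref{lemSums} gives $\indPoly{\matroid}=\indPoly{\matroid_1}\times\indPoly{\matroid_2}$. For $k\in\{2,3\}$, Lemma~\ref{lemSums} splits into two cases; in the first, $\matroid$ is a $(k{-}1)$-sum of a minor $\matroid_1'$ of $\matroid_1$ and a minor $\matroid_2'$ of $\matroid_2$, so the already-established bound for $(k{-}1)$-sums together with the minor estimate yields $\xc(\indPoly{\matroid})\le\xc(\indPoly{\matroid_1'})+\xc(\indPoly{\matroid_2'})\le\xc(\indPoly{\matroid_1})+\xc(\indPoly{\matroid_2})$.

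In the remaining case, fix minimum-size extensions $(Q_1,\pi_1)$, $(Q_2,\pi_2)$ of $\indPoly{\matroid_1}$, $\indPoly{\matroid_2}$, and let $R_1\subseteq E_1$, $R_2\subseteq E_2$ be the interface elements of Lemma~\ref{lemSums} ($|R_i|=1$ for $k=2$, $|R_i|=3$ for $k=3$), matched up via the pairing $r_1\leftrightarrow r_2$ (and $p_1\leftrightarrow p_2$, $q_1\leftrightarrow q_2$). Put
\[
    Q := \bigl\{ (y_1,y_2)\in Q_1\times Q_2 \;\bigm|\; (\pi_1(y_1))_{e}+(\pi_2(y_2))_{e'}=1 \text{ for every matched pair } (e,e') \bigr\},
\]
and let $\pi(y_1,y_2)$ be the restriction of $(\pi_1(y_1),\pi_2(y_2))$ to $(E_1\setminus R_1)\disjunion(E_2\setminus R_2)=E$. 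Since $Q$ is $Q_1\times Q_2$ cut by $|R_1|$ hyperplanes, $(Q,\pi)$ is an extension of its image with at most $\xc(\indPoly{\matroid_1})+\xc(\indPoly{\matroid_2})$ inequalities, so it suffices to prove $\pi(Q)=\indPoly{\matroid}$. The inclusion ``$\supseteq$'' is immediate: for $I_1,I_2$ as in Lemma~\ref{lemSums}, any $Q_i$-preimage of $\charvec{I_i}$ yields a point of $Q$ (the interface equations hold because $|I_1\cap\{e\}|+|I_2\cap\{e'\}|=1$) whose $\pi$-image is $\charvec{(I_1\setminus R_1)\disjunion(I_2\setminus R_2)}$.

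The reverse inclusion is where the work lies. Given $(y_1,y_2)\in Q$, write $\pi_i(y_i)=\sum_j\lambda^{(i)}_j\charvec{I^{(i)}_j}$ as a convex combination of vertices of $\indPoly{\matroid_i}$, obtaining probability distributions on $\mathcal{I}_1$ and $\mathcal{I}_2$ whose interface marginals sum to $1$ coordinatewise. Because shrinking the $R_i$-part of an independent set keeps it independent and leaves its image under $\pi$ unchanged, it suffices to couple these distributions into one on pairs $(I_1,I_2)$ with $(I_1\setminus R_1)\disjunion(I_2\setminus R_2)\in\mathcal{I}$, for then the corresponding convex combination has $\pi$-image $\pi(y_1,y_2)$ and lies in $\indPoly{\matroid}$. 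For $k=2$ this coupling is trivial: since $\Pr[r_1\in I_1]=1-\Pr[r_2\in I_2]=\Pr[r_2\notin I_2]$, match the ``$r_1\in I_1$''-mass with the ``$r_2\notin I_2$''-mass and, complementarily, the ``$r_1\notin I_1$''-mass with the ``$r_2\in I_2$''-mass, so that in every coupled pair exactly one of $r_1\in I_1$, $r_2\in I_2$ holds, which is precisely the condition of Lemma~\ref{lemSums}.

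I expect $k=3$ to be the main obstacle: the three interface equations control only three scalar marginals, and — as one can check on small examples — the straightforward coupling of the given convex combinations need not exist, so the non-interface parts must also be split and re-mixed. Here I would exploit the structure underlying Lemma~\ref{lemSums}, namely that $\{r_1,p_1,q_1\}$ is a circuit of $\matroid_1$ and $\{r_2,p_2,q_2\}$ a circuit of $\matroid_2$: for a fixed non-interface part $J_i=I_i\setminus R_i$, the attainable intersections $I_i\cap R_i$ are the independent sets of a matroid of rank at most $2$ on a three-element circuit, hence there are only a handful, and one checks that a pair $(J_1,J_2)$ completes to a member of $\mathcal{I}$ exactly when these two local ranks sum to at least $3$. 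Summing the three interface equations gives $\sum_{e\in R_1}(\pi_1(y_1))_e+\sum_{e\in R_2}(\pi_2(y_2))_e=3$, which forces the expected local ranks to sum to at least $3$; combining this with the downward-closedness of $\indPoly{\matroid}$ should allow one to realize $\pi(y_1,y_2)$ as a convex combination of vertices of $\indPoly{\matroid}$. A possibly cleaner alternative for $k=3$ is to bypass coupling altogether and verify the rank inequalities $\sum_{e\in S}x_e\le r_\matroid(S)$ defining $\indPoly{\matroid}$ directly, combining the bounds $\sum_{e\in S_i\cup T_i}(\pi_i(y_i))_e\le r_{\matroid_i}(S_i\cup T_i)$ over matched sets $T_1\subseteq R_1$, $T_2\subseteq R_2$ with the interface equations and the rank formula for $3$-sums to absorb the slack. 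Either way, carrying out the $k=3$ verification carefully is the technical heart of the argument.
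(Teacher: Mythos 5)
Your construction of the extended formulation is the same as the paper's: you dispose of $k=1$ and the degenerate (minor) cases of Lemma~\ref{lemSums} by the face/coordinate-projection argument, and in the main case you glue extensions of $\indPoly{\matroid_1}$ and $\indPoly{\matroid_2}$ along the interface equations $x_e+y_{e'}=1$, so the size accounting is correct and only the identity $\pi(Q)=\indPoly{\matroid}$ is at stake. Your coupling proof of the hard inclusion does work for $k=2$. But for $k=3$ you explicitly leave the verification open: you note that the three interface equations determine only three scalar marginals, that the naive coupling can fail, and you then offer two candidate repairs (a rank/circuit-based re-mixing, or a direct check of the rank inequalities of the $3$-sum) without carrying either out. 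That is a genuine gap, and not a routine one: the second route needs the facet description and the rank formula of a $3$-sum, and the first rests on the unverified claim that $\{r_1,p_1,q_1\}$ is a circuit of $\matroid_1$, which does not follow immediately from the paper's representation-based definition of the $3$-sum (the three relevant columns of the identity extension need not sum to zero over $\F[2]$).

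The paper closes exactly this gap with one structural observation that makes any coupling unnecessary, uniformly in $k$: the polytope $\big(\indPoly{\matroid_1}\times\indPoly{\matroid_2}\big)\cap Q$ is the intersection of a matroid independence polytope with a face of another matroid independence polytope (that of the partition matroid with parts $\{e_i,f_i\}$ for $i=1,\dotsc,t$ and capacity one), hence is an integral polytope by Edmonds' matroid intersection theorem and therefore equals the convex hull of its $0/1$-points; by Lemma~\ref{lemSums} these points project exactly onto the characteristic vectors of $\mathcal{I}$. If you want to complete your argument, importing this integrality statement is the cleanest fix; otherwise the $k=3$ coupling still has to be constructed in full.
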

\begin{proof}
    Let~$ \matroid = (E, \mathcal{I}) $ be a $ k $-sum of~$ \matroid_1 = (E_1,\mathcal{I}_1) $ and~$ \matroid_2 =
    (E_2,\mathcal{I}_2) $
    (with $ E_1 \cap E_2 = \varnothing $).
    First, observe that if some matroid~$ \matroid'' $ is a minor of~$ \matroid' $, then~$ \indPoly{\matroid''} $ can be
    obtained by intersecting~$ \indPoly{\matroid'} $ with a face of the $ 0/1 $-cube.
    Hence, $ \indPoly{\matroid''} $ is a coordinate projection of a face of~$ \indPoly{\matroid'} $ and therefore $
    \xc(\indPoly{\matroid''}) \le \xc(\indPoly{\matroid'}) $.
    Thus, by Lemma~\ref{lemSums}, it remains to consider the case in which there exist pairwise distinct elements $
    e_1,\dotsc,e_t \in E_1 $ and pairwise distinct elements $ f_1,\dotsc,f_t \in E_2 $ such that
    \[
        E = (E_1 \setminus \{e_1,\dotsc,e_t\}) \disjunion (E_2 \setminus \{f_1,\dotsc,f_t\})
    \]
    and
    \begin{align*}
        \mathcal{I} = \big\{(I_1 \setminus \{e_1,\dotsc,e_t\}) & \disjunion (I_2 \setminus \{f_1,\dotsc,f_t\}) : \\
        & I_1 \in \mathcal{I}_1, \, I_2 \in \mathcal{I}_2, \\
        & | I_1 \cap \{e_i\} | + | I_2 \cap \{f_i\} | = 1 \\
        & \text{for all } i \in \{1,\dotsc,t\} \big\}
    \end{align*}
    holds.
    Thus, setting
    \[
        Q := \setdef{(x,y) \in [0,1]^{E_1} \times [0,1]^{E_2}}[x_{e_i} + y_{f_i} = 1 \ \forall \, i = 1,\dotsc,t],
    \]
    we obtain that~$ \indPoly{\matroid} $ is a coordinate projection of
    \begin{align*}
        \conv \Big( & \big( \indPoly{\matroid_1} \times \indPoly{\matroid_2} \big)
            \cap Q
            \cap \big( \{0,1\}^{E_1} \times \{0,1\}^{E_2} \big) \Big) \\
        & = \big( \indPoly{\matroid_1} \times \indPoly{\matroid_2} \big) \cap Q,
    \end{align*}
    where the equality follows from Edmonds' intersection theorem for matroid polytopes~\cite{Edmonds70} and the fact that~$
    \indPoly{\matroid_1} \times \indPoly{\matroid_2} $ and~$ Q $ are (faces of) matroid polytopes.
    In particular, we established
    \begin{align*}
        \xc(\indPoly{\matroid}) & \le \xc \big( \indPoly{\matroid_1} \times \indPoly{\matroid_2} \big) \cap Q \big) \\
        & = \xc \big( \setdef{(x,y) \in \indPoly{\matroid_1} \times \indPoly{\matroid_2}}[
            x_{e_i} + y_{f_i} = 1 \ \forall \, i=1,\dotsc,t
        ] \big) \\
        & \le \xc \big( \indPoly{\matroid_1} \times \indPoly{\matroid_2} \big) \\
        & \le \xc(\indPoly{\matroid_1}) + \xc(\indPoly{\matroid_2}).
    \end{align*}
\qed
\end{proof}

We remark that~\cite[Lemma 3.4]{GrandeS14} gives a similar result on the structure of independence polytopes of matroids
arising from~$ 2 $-sums.
We are now ready to prove our main result.

\begin{theorem}
    \label{thmMain}
    For any regular matroid $ \matroid $ on ground set $ E $, we have
    \[
        \xc[\indPoly{\matroid}] \le \orderO{|E|^2}.
    \]
\end{theorem}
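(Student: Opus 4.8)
The plan is to assemble the pieces established above: Seymour's decomposition theorem, the additivity of extension complexity under $k$-sums, and the linear bound on the total leaf size. First I would invoke Theorem~\ref{thmSeymour} to obtain a decomposition tree~$T$ of~$\matroid$ whose leaves are graphic matroids, cographic matroids, or matroids of size at most ten, and in which every non-leaf node is a $k$-sum of its two children for some $k \in \{1,2,3\}$. By Lemma~\ref{lemXcSums}, the extension complexity of the independence polytope at any non-leaf node is at most the sum of the extension complexities of the independence polytopes of its two children. A straightforward induction on the number of nodes of~$T$ — using that the subtree rooted at a child of a node is itself a decomposition tree as in Theorem~\ref{thmSeymour} for that child — then yields
\[
    \xc[\indPoly{\matroid}] \le \sum_{L} \xc[\indPoly{L}],
\]
where the sum ranges over the leaves~$L$ of~$T$.

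Next I would bound each summand in terms of the cardinality~$n_L$ of the ground set of the leaf~$L$. If~$L$ is graphic or cographic, Proposition~\ref{propXcGraphicCographic} gives $\xc[\indPoly{L}] \le \orderO{n_L^2}$; and if~$L$ is one of the finitely many exceptional matroids of size at most ten, then $\xc[\indPoly{L}]$ is bounded by an absolute constant, hence also by $\orderO{n_L^2}$ (recall $n_L \ge 3$ for every leaf by Theorem~\ref{thmSeymour}). Combining, we get $\xc[\indPoly{\matroid}] \le \orderO{\sum_L n_L^2}$.

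Finally, Lemma~\ref{lemSize} tells us that $\sum_L n_L \le c\,|E|$ for some absolute constant~$c$. Since the $n_L$ are nonnegative, $\sum_L n_L^2 \le \big(\sum_L n_L\big)^2 \le c^2\,|E|^2$, and therefore $\xc[\indPoly{\matroid}] \le \orderO{|E|^2}$, as claimed. If desired, one can also record that the construction is effective: the decomposition tree is computed in polynomial time by Theorem~\ref{thmSeymour}, the extended formulations at the leaves are given explicitly by Martin's (resp.\ Williams') construction or, for the constant-size leaves, by brute force, and each glueing step in the proof of Lemma~\ref{lemXcSums} adds only polynomially many variables and inequalities, so the final extended formulation for~$\indPoly{\matroid}$ can be written down in time polynomial in~$|E|$.

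There is essentially no remaining technical obstacle, since the substantive work was carried out in Lemmas~\ref{lemSize}, \ref{lemSums}, and~\ref{lemXcSums}; the only points demanding a little care are phrasing the induction over~$T$ cleanly and checking the (elementary) estimate $\sum_L n_L^2 \le (\sum_L n_L)^2$, together with noting that the exceptional leaves contribute only an additive constant per leaf and that the number of leaves is itself $\orderO{|E|}$.
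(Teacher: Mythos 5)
Your proposal is correct and follows essentially the same route as the paper's own proof: apply Lemma~\ref{lemXcSums} inductively over the decomposition tree from Theorem~\ref{thmSeymour} to bound $\xc[\indPoly{\matroid}]$ by the sum of the leaf contributions, bound each leaf by a constant times the square of its ground-set size via Proposition~\ref{propXcGraphicCographic} (or boundedness for the constant-size leaves), and conclude with Lemma~\ref{lemSize} and the estimate $\sum_L n_L^2 \le \bigl(\sum_L n_L\bigr)^2$. The concluding remarks on polynomial-time constructibility likewise mirror the discussion the paper gives immediately after the theorem.
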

\begin{proof}
    Let~$ \matroid_1 = (E_1,\mathcal{I}_1), \dotsc, \matroid_k = (E_k,\mathcal{I}_k) $ be the leaf nodes in some
    decomposition tree as in Theorem~\ref{thmSeymour}.
    By Lemma~\ref{lemXcSums}, we have that
    \[
        \xc(\indPoly{\matroid}) \le \sum_{i=1}^k \xc(\indPoly{\matroid_i})
    \]
    holds.
    Because
    there is a constant $\gamma>0$ with 
    $ \xc(\indPoly{\matroid_i}) \le \gamma\cdot|E_i|^2 $ for each $ i = 1,\dotsc,k $ (recall that each leaf
    is graphic, cographic or has size bounded by $ 10 $), and $ \sum_{i=1}^k |E_i| $ can be bounded linearly in $
    |E| $ due to Lemma~\ref{lemSize}, we can further estimate
    \[
        \sum_{i=1}^k \xc(\indPoly{\matroid_i}) \le \gamma\cdotp\sum_{i=1}^k |E_i|^2
        \le \gamma\cdotp \Big( \sum_{i=1}^k |E_i| \Big)^2
        = \orderO{|E|^2},
    \]
    which gives the claim.
\qed
\end{proof}

Suppose we are given some matrix~$ A \in \F[2]^{m \times n} $ defining a regular matroid~$ \matroid = \matroid[A] $.
It is possible to compute a decomposition tree for~$ \matroid $ as in Theorem~\ref{thmSeymour} -- including~$ \F[2]
$-matrices representing all nodes -- in time polynomial in~$ m $ and~$ n $, see, e.g., \cite[Chap.~19--20]{Schrijver86}.
In the next step, for each matrix defining a (co-)graphic leaf node, we compute a graph inducing the same (co-)graphic
matroid.
This can be also done in polynomial time, see, e.g.,~\cite{Tutte60}.
Since the presented quadratic-size extended formulations for independence polytopes of graphic and cographic matroids
can be easily constructed given the underlying graphs, and since the leaf nodes not being graphic or cographic have
bounded size, we can thus compute quadratic-size extended formulations for the independence polytopes of all leaf nodes
in polynomial time.
Together with Lemma~\ref{lemSums} and Lemma~\ref{lemXcSums}, it is now easy to propagate these extended formulations
through the tree until we obtain an extended formulation for~$ \indPoly{\matroid} $.
The proof of Theorem~\ref{thmMain} shows that this extended formulation has size quadratic in the size of the ground set
of~$ \matroid $.
Thus, a quadratic-size extended formulation for~$ \indPoly{\matroid} $ can be computed in time
polynomial in~$ m $ and~$ n $.

\section{Concluding Remarks}
\label{secRemarks}
A natural question that arises in our context is for which other classes of matroids there are (similar) constructions
of polynomial-size extended formulations for the associated independence polytopes.
In Section~\ref{secResult}, we have seen that whenever a matroid~$ \matroid $ can be decomposed by means of $ 1 $-, $ 2
$- and $ 3 $-sums, the extension complexity of~$ \indPoly{\matroid} $ can be bounded by the sum of the extension complexities
of the leaf nodes' independence polytopes.
However, not many classes of matroids are known that admit decompositions using only $ 1 $-, $ 2 $- and $ 3 $-sums
and starting from simple building blocks -- as in case of regular matroids.
As an obvious generalization of regular matroids, linear matroids over~$ \F[2] $ do not seem to have such decompositions.

\vspace{1em}
\parbox{0.9\textwidth}{\textit{For any fixed field~$ \F $, does $ \xc(\indPoly{\matroid}) $ grow polynomially (in the
dimension) for every $ \F $-linear matroid $ \matroid $?}}
\vspace{1em}

\noindent
As mentioned in the introduction, we know 
from~\cite{Rothvoss12} that there \emph{exists} a family of independence polytopes of matroids whose
extension complexities grow exponentially in their dimension.
Moreover, following the argumentation in~\cite{Rothvoss12}, a random sequence of independence polytopes of matroids
has this property.
However, no such family is known explicitly.
In the light of Rothvoss' exponential lower bound on the extension complexities of matching polytopes~\cite{Rothvoss14} one might think of 
\emph{matching matroids}~\cite{Lawler76} (where, for a given graph $ G $, the independent sets are the subsets of nodes that can be covered by some matching of $ G $) as a candidate family. However, Sam Fiorini recently observed (personal communication) how to construct polynomial-size extended formulations for the independence polytopes of matching matroids by exploiting Edmonds-Galai-decompositions.   

Given a candidate matroid~$ \matroid $, the question arises, how to prove a non-trivial lower bound on~$
\xc(\indPoly{\matroid}) $.
A technique that has been established to provide simple proofs for super-polynomial lower bounds on the extension
complexities of many combinatorial polytopes (starting from the correlation polytope, see, e.g.,~\cite{KaibelW15}), is
to prove lower bounds on the \emph{rectangle covering number}.
In terms of the independence polytopes of matroids~$ \matroid = (E, \mathcal{I}) $ with rank function~$ r $, a
\emph{rectangle} can be defined as a set~$ \mathcal{F} \times \mathcal{V} $, where~$ \mathcal{F} $ is a set of non-empty
subsets of~$ E $, and~$ \mathcal{V} $ is a set of independent sets such that
\begin{equation}
    \label{eqNonIncidence}
    |I \cap S| \le r(S) - 1
\end{equation}
holds for all $ (S,I) \in \mathcal{F} \times \mathcal{V} $.
Up to an additive term of~$ \orderO{|E|} $, the rectangle covering number of~$ \indPoly{\matroid} $ is defined as the
smallest number of rectangles needed to cover all pairs~$ (S,I) $ with $ S \subseteq E $, $ I \in \mathcal{I} $ that
satisfy~\eqref{eqNonIncidence} and is known to provide a lower bound on the extension complexity of~$ \indPoly{\matroid}
$, see~\cite{FioriniKPT11}.
Unfortunately, it turns out that the rectangle-covering number of independence polytopes of matroids cannot provide
super-polynomial bounds:

\begin{proposition}
    Given a matroid~$ \matroid = (E, \mathcal{I}) $, the rectangle-covering number of~$ \indPoly{\matroid} $ is at
    most~$ \orderO{|E|^2} $.
\end{proposition}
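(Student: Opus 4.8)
The plan is to exhibit a family of at most $|E|^2$ rectangles that together cover all pairs $(S,I)$ with $S\subseteq E$, $I\in\mathcal{I}$ satisfying~\eqref{eqNonIncidence}; since the rectangle-covering number of $\indPoly{\matroid}$ exceeds this combinatorial quantity by only an additive $\orderO{|E|}$, this yields the claimed $\orderO{|E|^2}$ bound. Throughout, write $\operatorname{cl}$ for the closure operator of $\matroid$, and for an independent set $I$ and an element $e$ with $e\notin I$ and $I\cup\{e\}$ dependent, let $C(e,I)\subseteq I\cup\{e\}$ denote the fundamental circuit of $e$ with respect to $I$, i.e.\ the unique circuit contained in $I\cup\{e\}$ (it necessarily contains $e$). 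The first step is to reformulate~\eqref{eqNonIncidence}: because $I\cap S$ is independent we have $|I\cap S|=r(I\cap S)$, so $(S,I)$ satisfies~\eqref{eqNonIncidence} if and only if $r(I\cap S)<r(S)$, equivalently if and only if there is an element $e\in S$ with $e\notin\operatorname{cl}(I\cap S)$; any such $e$ automatically lies in $S\setminus I$.

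The technical ingredient I would isolate next is the following elementary fact: for an independent set $I$, an element $e\notin I$ with $I\cup\{e\}$ dependent, and any $J\subseteq I$, one has $e\in\operatorname{cl}(J)$ if and only if $C(e,I)\setminus\{e\}\subseteq J$. The ``if'' direction holds because in any circuit every element lies in the closure of the remaining ones; for ``only if'', note that $J\cup\{e\}$ would then contain a circuit, which must pass through $e$ since $J$ is independent and must equal $C(e,I)$ by uniqueness of the circuit in $I\cup\{e\}$.

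I would then define the rectangles, indexed by ordered pairs $(e,f)\in E\times E$. For $e=f$, set $\mathcal{F}_{e,e}:=\setdef{S\subseteq E}[e\in S]$ and $\mathcal{V}_{e,e}:=\setdef{I\in\mathcal{I}}[e\notin\operatorname{cl}(I)]$. For $e\ne f$, set $\mathcal{F}_{e,f}:=\setdef{S\subseteq E}[e\in S,\ f\notin S]$ and $\mathcal{V}_{e,f}:=\setdef{I\in\mathcal{I}}[e\notin I,\ f\in I,\ f\in C(e,I)]$ (so $\mathcal{V}_{e,f}$ implicitly contains only those $I$ for which $C(e,I)$ is defined). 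Each $\mathcal{F}$-set consists of non-empty subsets of $E$ and each $\mathcal{V}$-set of independent sets. Using the reformulation together with the ingredient applied to $J=I\cap S$, one checks that every pair in $\mathcal{F}_{e,f}\times\mathcal{V}_{e,f}$ satisfies~\eqref{eqNonIncidence}: in the diagonal case $e\in S$ and $e\notin\operatorname{cl}(I)\supseteq\operatorname{cl}(I\cap S)$; in the off-diagonal case $f\in C(e,I)\setminus\{e\}$ but $f\notin I\cap S$, so $C(e,I)\setminus\{e\}\not\subseteq I\cap S$, whence $e\notin\operatorname{cl}(I\cap S)$, while $e\in S$.

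It remains to check coverage. Given $(S,I)$ satisfying~\eqref{eqNonIncidence}, pick $e\in S$ with $e\notin\operatorname{cl}(I\cap S)$; then $e\notin I$. If $e\notin\operatorname{cl}(I)$, then $(S,I)\in\mathcal{F}_{e,e}\times\mathcal{V}_{e,e}$. Otherwise $C(e,I)$ is defined and, by the ingredient, $C(e,I)\setminus\{e\}\not\subseteq I\cap S$; choose $f\in C(e,I)\setminus\{e\}$ with $f\notin I\cap S$. Since $C(e,I)\setminus\{e\}\subseteq I$ we have $f\in I$, hence $f\notin S$ and $f\ne e$, so $(S,I)\in\mathcal{F}_{e,f}\times\mathcal{V}_{e,f}$. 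As there are exactly $|E|^2$ index pairs, this completes the argument. The main obstacle is not any calculation but identifying the right ``combinatorial certificate'' for~\eqref{eqNonIncidence} — a pair $(e,f)$ arising from a fundamental circuit — that splits cleanly into a condition on $S$ and a condition on $I$; once this is in hand, every verification above is a one-line consequence of the ingredient.
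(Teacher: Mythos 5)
Your proof is correct and follows essentially the same route as the paper: your rectangles coincide with the paper's, since $e\notin\operatorname{cl}(I)$ is equivalent to $e\notin I$ and $I\cup\{e\}\in\mathcal{I}$, and (for $f\in I$, $e\notin I$) $f\in C(e,I)$ is equivalent to $(I\setminus\{f\})\cup\{e\}\in\mathcal{I}$. You merely phrase the covering certificate via closures and fundamental circuits and spell out the verification that the paper leaves as an observation.
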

\begin{proof}
    Observe that a pair $ (S,I) $ with $ S \subseteq E $, $ I \in \mathcal{I} $ satisfies~\eqref{eqNonIncidence} if and
    only if
    \begin{itemize}
        \item there exists some~$ e \in S \setminus I $ such that~$ I \cup \{e\} \in \mathcal{I} $, or
        \item there exist some~$ e \in S \setminus I $ and~$ f \in I \setminus S $ such that~$ (I \setminus \{f\}) \cup
        \{e\} \in \mathcal{I} $.
    \end{itemize}
    Thus, each such pair is contained in a rectangle of type
    \[
        \setdef{S \subseteq E}[e \in S] \times \setdef{I \in \mathcal{I}}[e \notin I, \, I \cup \{e\} \in \mathcal{I}]
    \]
    for some $ e \in E $, or in a rectangle of type
    \[
        \setdef{S \subseteq E}[e \in S, \, f \notin S] \times \setdef{I \in \mathcal{I}}[e \notin I, \, f \in I, \, (I
        \setminus \{f\}) \cup \{e\} \in \mathcal{I}]
    \]
    for some $ e, f \in E $.
\qed
\end{proof}

\begin{acknowledgements}
    We would like to thank Klaus Truemper for valuable comments on the decomposition of matroids.
\end{acknowledgements}

\bibliographystyle{spmpsci}
\bibliography{references}

\end{document}